\newcommand\ytl[2]{
\parbox[b]{8em}{\hfill{\color{black}\bfseries\sffamily #1}~$\phantom{\cdots\cdots}$~}\makebox[0pt][c]{$\bullet$}\vrule\quad \parbox[c]{9.0cm}{\vspace{7pt}\color{red!0!black!100}\raggedright\sffamily #2.\\[7pt]}\\[-3pt]}
\newtheorem{thm}{Theorem}
\newtheorem{lem}{Lemma}
\newtheorem*{remark}{Remark} 
\newcommand\blfootnote[1]{%
  \begingroup
  \renewcommand\thefootnote{}\footnote{#1}%
  \addtocounter{footnote}{-1}%
  \endgroup
}
\noindent \textit{E-mail address}: \texttt{petteri.piiroinen@helsinki.fi}\par
	 \noindent \textit{E-mail address}: \texttt{lassi.roininen@lut.fi} \par
\begin{document}

\title{Brexit Risk Implied by the SABR Martingale Defect in the EURGBP Smile}
\author{Petteri Piiroinen$^1$, Lassi Roininen$^2$ and Martin Simon$^3$ \\
\affil{$^1$University of Helsinki, Finland \\ $^2$Lappeenranta-Lahti University of Technology, Finland \\ $^3$Deka Investment GmbH, Germany}}
\maketitle

\blfootnote{The opinions expressed in this article are those of the authors and do not necessarily reflect views of Deka Investment GmbH.}

\begin{abstract}
We construct a data-driven statistical indicator for quantifying the tail risk perceived by the EURGBP option market surrounding Brexit-related events. We show that under lognormal SABR dynamics this tail risk is closely related to the so-called martingale defect and provide a closed-form expression for this defect which can be computed by solving an inverse calibration problem. 
In order to cope with the the uncertainty which is inherent to this inverse problem, we adopt a Bayesian statistical parameter estimation perspective. We probe the resulting posterior densities with a combination of optimization and adaptive Markov chain Monte Carlo methods, thus providing a careful uncertainty estimation for all of the underlying parameters and the martingale defect indicator.
Finally, to support the feasibility of the proposed method, we provide a Brexit \lq\lq fever curve\rq\rq\ for the year 2019.

\end{abstract}

\begin{keywords}
Brexit, SABR model, martingale defect, uncertainty quantification, Bayesian estimation, adaptive Markov chain Monte Carlo
\end{keywords}

\begin{classcode} C02, D81, G10, G13\end{classcode}

\section{Introduction}
The British exit (Brexit) from the European Union and the uncertainty
surrounding its modalities have had an impact on financial markets during the
last years. Data-driven approaches to infer market expectations with regard to
Brexit-related events which use option market data have been proposed recently
by Clark and Amen \citeyearpar{Clark} and by Hanke, Poulsen and Weissensteiner
\citeyearpar{Hanke}.

In this work, we propose a novel forward-looking indicator based on foreign
exchange (FX) option market data for quantifying the market expectations with
regard to tail risks around important Brexit-related events. The tail risk
observed in FX option markets, i.e., the risk of outlier returns two or more
standard deviations below the mean, is significantly greater than the risk
obtained under the theoretical assumption that the returns of the underlying
currency pair follows a lognormal distribution. Our so-called \emph{SABR GBP
martingale defect indicator} has first been introduced in the context of
quantifying the risk of stock price bubbles, cf. \cite{Piiroinen}. It is
derived from the market price of FX tail risk related to a devaluation of the
GBP against the EUR which is in turn calculated from the prices of
out-of-the-money EURGBP options with maturity close to the event of interest.
In 2019 this indicator typically ranged from $0\%$ to $20\%$ and as it rises,
the corresponding tail of the return distribution acquires more weight such
that the probabilities of extreme outlier returns become more significant.
Within the lognormal SABR model, at some point the underlying process becomes a
strict local martingale, see Theorem \ref{thm:1} below, and it as been shown by
Jacquier and Keller-Ressel in \cite{Jacquier} that this is indeed a
model-independent result: An increase in perceived tail risk increases the
relative demand for out-of-the-money options leading to a steepening of the
slope of the implied volatility smile. One of the key results in
\cite{Jacquier} is that, under the assumption of fully collateralized trades,
the total implied variance for a fixed time to maturity in log-strike space
attains an asymptotic slope of $2$ if and only if the discounted underlying
stochastic process is a strict local martingale. Strict local martingales have
in financial applications been mainly employed to model stock price bubbles in
financial markets, see, e.g., \citep{CoxHobson, Jarrow, Jarrow3, Jarrow4,
Jarrow5, Jarrow6, Kardaras, Piiroinen, Protter1}. For applications the context
of FX modeling see, e.g., \citep{Carr,Chaim}.
We introduce here a statistical martingale defect indicator based on bid and
ask implied volatility quotes which is carefully tailored to account for the
uncertainty inherent in FX option markets stemming from the availability of
merely 5 implied volatility quotes available for each time to maturity. To make
things even more uncertain, the 10-delta quotes which are particularly
important with regard to the tail behavior presumably have only limited
reliability because they are derived from less liquid options; for a discussion
of the topic we refer the reader to \cite{Wystup1}. 
Our indicator provides a simple and reliable tool to quantify perceived tail risk while at the same time accounting for these inherent uncertainties which is important when it comes to data-driven generation of relevant risk and stress test scenarios for risk management purposes. On top of that we would like to stress the fact that using the SABR model to valuate FX options is a widespread practice in the market so that integration of our indicator into existing risk management infrastructure should be straight forward. 

The outline of this paper is as follows. In the following section the notation and the mathematical setting will be introduced. In Section 3, we study the presence of strict local martingales in the lognormal SABR model for the underlying currency pair EURGBP and derive the corresponding martingale defect indicator. Section 4 is devoted to the introduction of the statistical framework for uncertainty quantification and in Section 5 we present a Brexit \lq\lq fever curve\rq\rq\ along the timeline of relevant events in 2019 which is enclosed in Appendix A. We conclude with a discussion of our findings in Section 6.

\section{Notation and mathematical setting}
Let $(\Omega,\mathcal{F},\mathcal{F}_t,\mathbbm{P}_i)$, $i=\text{\pounds},\text{\euro{}},$ denote a filtered probability space such that $\mathcal{F}_t$ satisfies the usual assumptions. On this probability space we will define the stochastic process $\{S_t,t\geq 0\}$ to model the foreign exchange rate in the usual FOR-DOM convention capturing a DOM investor's perspective and the process $\{\widehat{S}_t:=S_t^{-1},t\geq 0\}$ for the DOM-FOR exchange rate which corresponds to the point of view of a FOR investor. That is, $S_t$ denotes the number of units of domestic currency (DOM) required to buy one unit of foreign currency (FOR) at time $t$ and vice versa for $\widehat{S}_t$. In this work we are interested in the currency pair EURGBP, where EUR is the foreign currency and GBP the domestic one. We assume that for each currency a risk-free money market account exists such that 
\begin{equation*}
dB_{i}(t) = r_{i}(t)\mathrm{d}t,\quad B_{i}(0)=1,\ i=\text{\pounds},\text{\euro{}},
\end{equation*}
where $r_{\text{\euro{}}}$ denotes the time-dependent continuously compounded foreign interest rate and $r_{\text{\pounds}}$ denotes the time-dependent continuously compounded domestic interest rate (for the sake of readability we will suppress the time-dependence in our notation). The DOM investor can trade in the domestic money market account $B_{\text{\pounds}}(t)$ or the foreign money market account $B_{\text{\euro{}}}(t)S_t$, whereas the FOR investor may trade in the foreign money market account $B_{\text{\euro{}}}(t)$ or the domestic money market account $B_{\text{\pounds}}(t)\widehat{S}_t$. We denote by $\mathbbm{P}_{\text{\pounds}}$ a \emph{domestic equivalent martingale measure}, i.e., a probability measure such that 
\begin{equation}\label{eqn:elmm}
\mathbbm{E}^{\mathbbm{P}_{\text{\pounds}}}\left\{\frac{B_{\text{\euro{}}}(T)S_T}{B_{\text{\pounds}}(T)}\Big|\mathcal{F}_t\right\}\leq \frac{B_{\text{\euro{}}}(t)S_t}{B_{\text{\pounds}}(t)}
\end{equation} 
and by $\mathbbm{P}_{\text{\euro{}}}$ a \emph{foreign equivalent martingale measure}, i.e., a probability measure such that 
\begin{equation}
\mathbbm{E}^{\mathbbm{P}_{\text{\euro{}}}}\left\{\frac{B_{\text{\pounds}}(T)S_T^{-1}}{B_{\text{\euro{}}}(T)}\Big|\mathcal{F}_t\right\}\leq\frac{B_{\text{\pounds}}(t)S_t^{-1}}{B_{\text{\euro{}}}(t)}.
\end{equation} 
For the rest of this work we assume that a domestic equivalent martingale measure $\mathbbm{P}_{\text{\pounds}}$ which satisfies (\ref{eqn:elmm}) exists or in other words that the process 
\begin{equation*}
\left\{\frac{B_{\text{\euro{}}}(t)S_t}{S_0B_{\text{\pounds}}(t)},t\geq 0\right\}
\end{equation*}
is a local $\mathbbm{P}_{\text{\pounds}}$-martingale. This implies that the
market model satisfies NFLVR, cf. Delbean and Schachermayer \cite{Delbean1}. We
do not in general assume that $\mathbbm{P}_{\text{\pounds}}$ is unique as we
are going to work with the SABR model which is an incomplete market model -- in
this setting the market can be completed in the sense that calibration to
observed option market data chooses a particular equivalent martingale measure.

An outright forward contract trades at time $t$ at zero cost and leads to an exchange of notional at time $T$ at the pre-specified outright forward rate
\begin{equation*}
F_t(T) = S_t\cdot e^{(r_{\text{\pounds}}-r_{\text{\euro{}}})(T-t)}.
\end{equation*}
At time $T$, the foreign notional amount $N$ will be exchanged against an amount of $NF_t(T)$ domestic currency units. 
FX options are usually physically settled, that is the buyer of a EUR European plain vanilla call with strike $K$ and time to maturity $T$ receives a EUR notional amount $N$ and pays $NK$ GBP. The value of such plain vanilla options is computed via the standard Black-Scholes-Merton formula
\begin{equation*}
V_t(K,T,\phi)=\textrm{BSM}(F_t;K,T,\phi)=\phi e^{-r_{\text{\pounds}}}(T-t)\{F_t(T)\mathcal{N}(\phi d_+)-K\mathcal{N}(\phi d_-)\},
\end{equation*}
where $\mathcal{N}(\cdot)$ denotes the cumulative normal distribution function, $\phi=\pm 1$ for a call, respectively put option and 
\begin{equation*}
d_{\pm} =\frac{\log(\frac{F_t(T)}{K})\pm\frac{1}{2}\sigma^2(T-t)}{\sigma\sqrt{T-t}}
\end{equation*}
with the Black-Scholes-Merton volatility $\sigma$. This volatility for every strike and time to maturity can be implied from plain vanilla option prices 
\begin{equation*}
\sigma^{\textrm{M}}(F_t;K,T,\phi)=\textrm{BSM}^{-1}(V_t^{\textrm{M}}(K,T,\phi),F_t;K,T,\phi).
\end{equation*}
In FX markets, vanilla option prices are commonly quoted via an at-the-money
straddle volatility together with quotes for $10$-delta and $25$-delta risk
reversals respectively strangles with expiry dates corresponding to overnight
maturity, $1$, $2$ and $3$ weeks and $1$, $2$, $3$, $4$, $5$, $6$, $9$ and $12$
months. Quoting conventions vary depending on the underlying currency pair,
expiry and broker, see, e.g., \cite{Reiswich1,Reiswich2,Wystup2}. We use in
this work preprocessed market data obtained from Refinitiv Financial Solutions
which provides composite implied volatilities derived from different
contributing broker sources versus the corresponding deltas. We converted this
data into strike space in line with the used market conventions with regard to
deltas and ATM definition, see \cite{Reiswich1,Wystup2} for a detailed
description of delta-strike conversion.
To sum up, we have for each time to maturity $T$ a vector $\boldsymbol{y}$ which consists of five mid implied volatility data points 
\begin{equation*}
\boldsymbol{y}:=\left[\begin{array}{l}
\sigma^{\textrm{M}}(F_t;K_{10,-1},T)\\
\sigma^{\textrm{M}}(F_t;K_{25,-1},T)\\
\sigma^{\textrm{M}}(F_t;K_{\text{ATM}},T)\\
\sigma^{\textrm{M}}(F_t;K_{25,1},T)\\
\sigma^{\textrm{M}}(F_t;K_{10,1},T)\end{array}\right]
\end{equation*}
with strikes $K_{x,\phi}$, $x\in\{10,25\}$, chosen such that the corresponding call (for $\phi=1$) and put (for $\phi=-1$) option has a delta of $x$, together with the corresponding vector of bid ask spreads ${\textbf{BA}}\in\mathbbm{R}^5$.

\section{The SABR Martingale Defect in FX Smiles}
We fit the volatility smile using the stochastic alpha, beta, rho, or brief SABR model. That is, we assume the following dynamics for the forward process under the domestic equivalent martingale measure $\mathbbm{P}_{\text{\pounds}}$:
\begin{eqnarray}\label{eqn:SABR1}
\mathrm{d}F_t(T)&=&\alpha_t F_t(T)^{\beta}\mathrm{d}W_t^{(1)}\\
\label{eqn:SABR2}\mathrm{d}\alpha_t&=&\nu\alpha_t \mathrm{d}W_t^{(2)},
\end{eqnarray}
with fixed elasticity parameter $\beta=1$ (which is a common choice for FX smile modeling), volatility of volatility $\nu>0$ and two correlated $\mathbbm{P}_{\text{\pounds}}$-Brownian motions $W^{(1)}$ and $W^{(2)}$ with correlation parameter $\rho\in[-1,1]$. 
The dynamics of $\widehat{F}_t(T):=F_t(T)^{-1}$ under the foreign equivalent martingale measure $\mathbbm{P}_{\text{\euro{}}}$ are given by 
\begin{eqnarray}\label{eqn:SABR3}
\mathrm{d}\widehat{F}_t(T)&=&\alpha_t \widehat{F}_t(T)^{2-\beta}\mathrm{d}\widehat{W}_t^{(1)}\\
\label{eqn:SABR4}\mathrm{d}\alpha_t&=&\rho\nu\alpha_t^2\widehat{F}_t(T)^{1-\beta} \mathrm{d}t+\nu\alpha_t \mathrm{d}\widehat{W}_t^{(2)},
\end{eqnarray}
with two correlated $\mathbbm{P}_{\text{\euro{}}}$-Brownian motions $\widehat{W}^{(1)}$ and $\widehat{W}^{(2)}$ with correlation parameter $-\rho$. 

\vspace{.2cm}
The following Theorem provides the theoretical foundation of our approach.
\begin{thm}\label{thm:1}
Assume that the forward process $\{F_t(T), 0\leq t\leq T\}$ follows the SABR dynamics (\ref{eqn:SABR1}), (\ref{eqn:SABR2}), then we have for all $t\in[0,T]$: 
\begin{equation}\label{thm:1:eq1}
F_0(T)\mathbbm{P}_{\text{\euro{}}}\{\widehat{F}_t(T)>0\}=\mathbbm{E}^{\mathbbm{P}_{\text{\pounds}}}\{F_t(T)|F_0(T)\}.
\end{equation}
Moreover, we have for all $t\in(0,T]$: 
\begin{equation}
\mathbbm{E}^{\mathbbm{P}_{\text{\pounds}}}\{F_t(T)|F_0(T)\}<F_0(T)\text{
  if and only if }\rho>0. 
\end{equation}
\end{thm}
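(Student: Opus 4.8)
The plan is to reduce both assertions to statements about the auxiliary volatility diffusion seen under the foreign measure $\mathbbm{P}_{\text{\euro{}}}$, and to read the martingale defect off a classical explosion criterion. Throughout I write $M_t:=F_t(T)/F_0(T)$; by (\ref{eqn:SABR1}) with $\beta=1$ this is a strictly positive, driftless, hence nonnegative $\mathbbm{P}_{\text{\pounds}}$-local martingale with $M_0=1$, and $\widehat{F}_t(T)=1/F_t(T)=1/(F_0(T)M_t)$. Under $\mathbbm{P}_{\text{\pounds}}$ the volatility $\alpha_t=\alpha_0\exp(\nu W_t^{(2)}-\tfrac12\nu^2 t)$ is a non-exploding geometric Brownian motion, so $F$ stays finite and positive and $\widehat{F}(T)$ never vanishes.

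First I would establish the mass identity (\ref{thm:1:eq1}). The measure $\mathbbm{P}_{\text{\euro{}}}$ is the change-of-num\'eraire measure attached to the deflated process $M$, equivalently the F\"ollmer measure of the nonnegative local martingale $M$; under it the reciprocal $\widehat{F}(T)$ is the relevant local martingale with dynamics (\ref{eqn:SABR3})--(\ref{eqn:SABR4}), and $0$ is an absorbing boundary for $\widehat{F}(T)$. Since the domestic investor never witnesses an explosion of $F$, the set $\{\widehat{F}_t(T)=0\}$ is $\mathbbm{P}_{\text{\pounds}}$-null, so $\mathbbm{P}_{\text{\pounds}}\ll\mathbbm{P}_{\text{\euro{}}}$ on each $\mathcal{F}_t$ with
\begin{equation*}
\frac{\mathrm{d}\mathbbm{P}_{\text{\pounds}}}{\mathrm{d}\mathbbm{P}_{\text{\euro{}}}}\Big|_{\mathcal{F}_t}=\frac{1}{M_t}\,\mathbbm{1}_{\{\widehat{F}_t(T)>0\}}=F_0(T)\,\widehat{F}_t(T)\,\mathbbm{1}_{\{\widehat{F}_t(T)>0\}}.
\end{equation*}
Testing this density against $F_t(T)$ and using $F_t(T)\widehat{F}_t(T)=1$ on $\{\widehat{F}_t(T)>0\}$ collapses the expectation to $F_0(T)\,\mathbbm{P}_{\text{\euro{}}}\{\widehat{F}_t(T)>0\}$, which is exactly (\ref{thm:1:eq1}). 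The delicate point is to justify the one-sided absolute continuity and the resulting mass formula when $M$ is only a local martingale; I would do this by localizing along $\tau_n=\inf\{t:F_t(T)\geq n\}$, defining measures $\mathbbm{P}_{\text{\euro{}}}^{(n)}$ through the bounded martingale densities $M_{t\wedge\tau_n}$, and passing to the limit following F\"ollmer, while identifying $\{\lim_n\tau_n\leq t\}$ with $\{\widehat{F}_t(T)=0\}$.

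Given the identity, the second statement becomes a boundary-classification problem: by (\ref{thm:1:eq1}), $\mathbbm{E}^{\mathbbm{P}_{\text{\pounds}}}\{F_t(T)|F_0(T)\}<F_0(T)$ holds if and only if $\mathbbm{P}_{\text{\euro{}}}\{\widehat{F}_t(T)=0\}>0$. Representing $\widehat{F}(T)$ under $\mathbbm{P}_{\text{\euro{}}}$ as the stochastic exponential $\widehat{F}_t(T)=\widehat{F}_0(T)\exp(\int_0^t\alpha_s\,\mathrm{d}\widehat{W}_s^{(1)}-\tfrac12\int_0^t\alpha_s^2\,\mathrm{d}s)$, absorption at $0$ before time $t$ is equivalent to $\int_0^t\alpha_s^2\,\mathrm{d}s=\infty$, i.e.\ to the volatility $\alpha$ exploding to $+\infty$. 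The decisive simplification, special to $\beta=1$, is that the factor $\widehat{F}_t(T)^{1-\beta}$ in (\ref{eqn:SABR4}) equals $1$, so under $\mathbbm{P}_{\text{\euro{}}}$ the process $\alpha$ is the autonomous one-dimensional diffusion $\mathrm{d}\alpha_t=\rho\nu\alpha_t^2\,\mathrm{d}t+\nu\alpha_t\,\mathrm{d}\widehat{W}_t^{(2)}$ on $(0,\infty)$.

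Finally I would apply Feller's test for explosions to this diffusion. Its scale density is $s'(\alpha)=\exp(-2\rho\alpha/\nu)$, so $+\infty$ sits at finite scale distance exactly when $\rho>0$; carrying this through the Feller function shows that the test integral at $+\infty$ converges if and only if $\rho>0$, whence $\alpha$ reaches $+\infty$ in finite time with positive probability if and only if $\rho>0$, while for $\rho\leq0$ the point $+\infty$ is a natural, inaccessible boundary. As the diffusion is time-homogeneous and an accessible boundary is reached in arbitrarily short time with positive probability, the strict inequality then holds for every $t\in(0,T]$, with equality throughout when $\rho\leq0$. I expect the genuine obstacle to be the rigorous part of the first step --- constructing $\mathbbm{P}_{\text{\euro{}}}$ and proving the mass formula in the strict-local-martingale regime --- together with the technical point that an explosion of $\alpha$ forces $\int_0^t\alpha_s^2\,\mathrm{d}s=\infty$, which I would verify from the leading-order blow-up $\alpha_s\sim[\rho\nu(\zeta-s)]^{-1}$ near the explosion time $\zeta$; the Feller computation itself is routine.
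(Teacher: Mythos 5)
Your proposal is correct and follows the same skeleton as the paper's proof: both arguments are Sin-type reductions in which the martingale defect of $F$ under $\mathbbm{P}_{\text{\pounds}}$ is identified with the explosion probability, under the foreign measure, of the autonomous volatility diffusion $\mathrm{d}\alpha_t=\rho\nu\alpha_t^2\,\mathrm{d}t+\nu\alpha_t\,\mathrm{d}\widehat{W}_t^{(2)}$ (which is exactly the paper's auxiliary process $v$), and both prove the mass identity (\ref{thm:1:eq1}) by localizing along the hitting times $\inf\{t: F_t(T)\geq n\}$ --- your $\tau_n$ are precisely the paper's $\tau_n^{(1)}$, and your F\"ollmer-measure densities are the paper's stopped measures $\mathbbm{P}^{(1)}_{n,\text{\euro{}}}$ in different notation. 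Where you genuinely diverge is in the explosion dichotomy and in the link between explosion and absorption. The paper handles $\rho\leq 0$ by a comparison argument and outsources the case $\rho>0$ to \cite{Piiroinen} (following \cite{Sin98}), and it identifies $\{\widehat{F}_t>0\}$ with non-explosion via uniqueness of the stopped SDEs; you instead run Feller's test directly (scale density $e^{-2\rho\alpha/\nu}$, test integral finite at $+\infty$ iff $\rho>0$), and you connect absorption of $\widehat{F}$ at zero to explosion of $\alpha$ through the Dambis--Dubins--Schwarz representation, i.e.\ $\widehat{F}$ hits zero by time $t$ iff $\int_0^t\alpha_s^2\,\mathrm{d}s=\infty$. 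Your route is more self-contained (no appeal to \cite{Piiroinen} for the hard direction) and makes the $\rho$-dichotomy transparent in one computation; the paper's route is shorter because it leans on established results, and its stopped-measure formulation sidesteps the topological caveats of the F\"ollmer construction that you correctly flag.

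Two technical points to repair in a write-up. First, your plan to verify that explosion forces $\int_0^t\alpha_s^2\,\mathrm{d}s=\infty$ from the pathwise blow-up rate $\alpha_s\sim[\rho\nu(\zeta-s)]^{-1}$ is backwards: establishing that rate is harder than the claim itself. Argue instead by contradiction: on the event $\{\int_0^\zeta\alpha_s^2\,\mathrm{d}s<\infty\}$ the drift integral $\int_0^\zeta\rho\nu\alpha_s^2\,\mathrm{d}s$ is finite and the local martingale $\int_0^\cdot\nu\alpha_s\,\mathrm{d}\widehat{W}_s^{(2)}$ has finite quadratic variation, hence converges, so $\alpha$ would have a finite limit at $\zeta$, contradicting explosion. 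Second, the closing assertion that an accessible boundary is reached in arbitrarily short time with positive probability --- needed to get strict inequality for \emph{every} $t\in(0,T]$ --- is true but not automatic from Feller's test alone; it requires a strong Markov argument (reach a high level $n$ within $t/2$ with positive probability, then explode from $n$ within $t/2$ with probability bounded below, since the comparison ODE $\dot v=\rho\nu v^2$ explodes in time $1/(\rho\nu n)$). This is the same gap the paper covers by citing \cite{Piiroinen}, so you are at comparable rigor, but it should be stated as a step rather than as a given.
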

\pagebreak
\begin{proof}
  The forward $F$ and stochastic volatility $\alpha$ given by
  equations~\eqref{eqn:SABR1} and~\eqref{eqn:SABR2} are both positive exponential
  semimartingales
  \begin{equation}
    F_t = F_0 \mathcal E(\alpha \cdot W^{(1)})_t \quad\text{and}\quad
    \alpha_t = \alpha \mathcal E(\nu W^{(2)})_t,
  \end{equation}
  respectively. Here we denote the exponential semimartingale of $X$
  with $X_0 = 0$ as
  \[
    \mathcal E(X)_t = \exp\left(X_t - \frac12 \langle X \rangle_t\right)
  \]
  and the $H \cdot X$ denotes the stochastic integral with respect to a
  semimartingale 
  \[
    (H \cdot X)_t = \int_0^t H(s) \mathrm{d} X_s.
  \]
  By Fatou's Theorem, the positive exponential semimartingale $\mathcal
  E(X)$ is a martingale if and only if $\mathbbm{E}^{\mathbbm{P}_{\text{\pounds{}}}}\{\mathcal E(X)_t\} = 1$ for
  every $t > 0$. 
  As Cox and Hobson~\cite{CoxHobson} point out, the martingale property
  follows with an argument due to Sin~\cite{Sin98}.
  According to this work, when $\beta = 1$, the
  expectation of the exponential semimartingale $F_t/F_0 = \mathcal
  E(\alpha \cdot W^{(1)})$ on the interval $[0, T]$ is given by
  \[
    \mathbbm{E}^{\mathbbm{P}_{\text{\pounds{}}}}\{ \mathcal E(\alpha \cdot W^{(1)})_t\} = \mathbbm{P}\{
    \widehat{\tau}_\infty > t\}
  \]
  for every $t > 0$ where under the $\mathbbm{P}_{\text{\euro}}$-probability the stopping time
  $\widehat{\tau}_\infty$ is the time of explosion of the auxiliary
  process that under $\mathbbm{P}_{\text{\euro}}$-probability satisfies the following
  SDE
  \[
    \mathrm{d} v_t = \nu v_t \mathrm{d} W_t^{(3)} + \nu \rho v_t^2 \mathrm{d} t,\quad v_0 =
    \alpha,
  \]
  where $W^{(3)}$ is a standard Brownian motion under $\mathbbm{P}_{\text{\euro}}$.  We
  have to open this up bit more, since the argument reveals the stated
  properties.

  \noindent First of all, assuming (for the sake of simplicity) the stochastic volatility is
  bounded, then the Novikov condition implies that the forward $F_t$
  is a uniformly integrable martingale on $[0, T]$.
  In this case, the Girsanov Theorem implies that the foreign equivalent
  martingale measure $\mathbbm{P}_{\text{\euro{}}}$ is given by
  \[
    \mathbbm{P}_{\text{\euro{}}} \{A\} = 
    \mathbbm{E}^{\mathbbm{P}_{\text{\pounds{}}}} \{F_T [ A ]\}
  \]
  and the boundedness of stochastic volatility implies that both $F$ and
  $\widehat{F}$ stay bounded on $[0,T]$, i.e., neither the domestic nor
  the foreign forward reach zero before time $T$ with probability 1
  under either probability.

  Therefore, in order that either one is a strict local martingale the
  stochastic volatility should attain unbounded values. The analysis of
  the SDE already implies that this is the case and moreover, there are
  no explosions on the bounded interval $[0, T]$.
  Defining the stopping times 
  \[
    \tau_n = \inf\{\, t \in (0, T] \, ; \, \alpha_t \ge n \,\},\quad n\in\mathbbm{N}
  \]
  and stopped foreign equivalent martingale measures
  \[
    \mathbbm{P}_{n,\text{\euro{}}} \{A\} = 
    \mathbbm{E}^{\mathbbm{P}_{\text{\pounds{}}}} \{F_T^{\tau_n} [ A ]\}
  \]
  we obtain as in \cite{Sin98} with Dominated Convergence Theorem and Girsanov
Theorem 
  \[
    \mathbbm{E}^{\mathbbm{P}_{\text{\pounds{}}}} \{F_t\}
    =
    \lim_{n \to \infty} \mathbbm{P}_{n,\text{\euro{}}}\{\tau_n > t\}.
  \]
Note that the stochastic volatility under the stopped foreign equivalent martingale
  measures $\mathbbm{P}_{n,\text{\euro{}}},$ $n\in\mathbbm{N}$ satisfies the SDE
  \[
    d\alpha_t = [\, t \le \tau_n \,] \rho \nu \alpha_t^2 dt
	      + [\, t \le \tau_n \,] \nu\alpha_t dW_t^{(2)}.
  \]
  Note moreover, that if $\tau_n > t$, then both $F_t > 0$ and $\widehat{F}_t > 0$,
  so 
  \[
    \liminf_{n \to \infty} \mathbbm{P}_{n,\text{\euro{}}}\{\widehat F_t > 0\}
    \ge \liminf_{n \to \infty} \mathbbm{P}_{n,\text{\euro{}}}\{\tau_n > t\}.
  \]
  This estimate implies that if the limit $\mathbbm{P}_{n,\text{\euro{}}}\{\tau_n
  > t\}$ is one, then the forward process $F_t$ is a
  $\mathbbm{P}_{\text{\pounds{}}}$-martingale and 
  $\mathbbm{P}_{\text{\euro{}}}\{\widehat F_t > 0\} = 1$.

  Since there are clearly no explosions, when $\rho = 0$ and by Comparison
  Theorem, when $\rho \le 0$, we can estimate 
  \[
    \mathbbm{P}_{n,\text{\euro{}}}\{\tau_n^{(\rho)} > t\}
    \ge \mathbbm{P}_{n,\text{\euro{}}}\{\tau_n^{(0)} > t\}
  \]
  so there are no explosions when $\rho < 0$ either. We have 
  shown in~\cite{Piiroinen} that when $\rho > 0$ the explosion does occur,
namely
  \[
    \mathbbm{E}^{\mathbbm{P}_{\text{\pounds{}}}} \{F_t\}
    = \mathbbm{P}_{\text{\euro{}}} \{\tau_\infty > t\} < 1.
  \]
  We already deduced that if there are no explosions on the interval
  $[0, T]$, then necessarily $\widehat F_t > 0$, so 
  \[
    \mathbbm{P}_{\text{\euro{}}} \{\tau_\infty > t\} \le 
    \mathbbm{P}_{\text{\euro{}}} \{\widehat F_t > 0\}.
  \]
  In order to show the reverse inequality in the case $\rho > 0$, we
  will repeat the previous argument but this time we use stopping times
  \[
    \tau^{(1)}_n = \inf\{\, t \in (0, T] \, ; \, F_t \ge n \,\},\quad n\in\mathbbm{N}
  \]
  and second the set of stopped foreign equivalent martingale measures
  \[
    \mathbbm{P}^{(1)}_{n,\text{\euro{}}} \{A\} = 
    \mathbbm{E}^{\mathbbm{P}_{\text{\pounds{}}}} \{F_T^{\tau^{(1)}_n} [ A ]\}.
  \]
  Repeating the same argument we notice that
  \[
    \mathbbm{E}^{\mathbbm{P}_{\text{\pounds{}}}} \{F_t\}
    =
    \lim_{n \to \infty} \mathbbm{P}^{(1)}_{n,\text{\euro{}}}\{\tau^{(1)}_n > t\}.
  \]
  By the uniqueness of the SDE we can deduce that the previous limit is
  \[
    \lim_{n \to \infty} \mathbbm{P}^{(1)}_{n,\text{\euro{}}}\{\tau^{(1)}_n > t\}
    = 
    \mathbbm{P}_{\text{\euro{}}}\{\tau^{(1)}_\infty > t\}
    = 
    \mathbbm{P}_{\text{\euro{}}}\{\widehat F_t > 0\}.
  \]
  This finally implies the equation~\eqref{thm:1:eq1} by relating the
  explosion probability with the probability of $\widehat F$ not hitting
  zero before time $t$.
\end{proof}
We should emphasize that the fact that Theorem \ref{thm:1} considers the
martingale defect only under the domestic equivalent martingale measure
poses no restriction to generality because of the well-known DOM-FOR
symmetry which we recall, for the sake of self-containedness, in the
following Lemma. 
\begin{lem}
Let $\widehat{V}$ denote plain vanilla option prices under the foreign equivalent martingale measure, then 
\begin{equation*}
 V_t(K, T, \phi)
  = \mathrm{BSM}(F_t; K, T, \phi)
  = S_t \mathrm{BSM}(\widehat{F}_t; {\mbox{$\frac{{1}}{{K}}$}}, T, -\phi)
  = S_t K \widehat{V}_t({\mbox{$\frac{{1}}{{K}}$}}, T, -\phi),
\end{equation*}
where $\phi = \pm 1$ for a call and put option, respectively.
\end{lem}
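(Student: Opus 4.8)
The first equality is the definition of $V_t$, so the plan is to establish the remaining two along two complementary lines: a model-free static-replication argument that fixes the numeraire-conversion scalars, cross-checked against a direct substitution into the Black--Scholes--Merton formula. First I would record the elementary pointwise payoff identity obtained from $\widehat S_T = S_T^{-1}$: for $\phi = \pm 1$,
\[
\bigl(\phi(S_T-K)\bigr)^+ = K\,S_T\,\bigl(\phi(\tfrac1K-\widehat S_T)\bigr)^+,
\]
which rewrites a GBP-settled call/put on $S$ with strike $K$ as $K S_T$ units of a GBP-settled put/call on $\widehat S$ with strike $1/K$. The role of the factor $S_T$ is that it is exactly the spot rate converting the EUR-denominated foreign payoff into GBP, so after dividing the time-$T$ GBP value by $S_T$ one is left with $K\bigl(\phi(1/K-\widehat S_T)\bigr)^+$ EUR, i.e.\ $K$ foreign vanillas.

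Next I would price both sides. In the setting of the Lemma the volatility is deterministic, so $F$ and $\widehat F$ are genuine $\mathbbm{P}_{\text{\pounds}}$- and $\mathbbm{P}_{\text{\euro{}}}$-martingales (Novikov), and, exactly as in the proof of Theorem~\ref{thm:1}, the two measures are linked by the density $\mathrm d\mathbbm{P}_{\text{\euro{}}}/\mathrm d\mathbbm{P}_{\text{\pounds}}\big|_{\mathcal F_T} = F_T(T)/F_0(T)$. Valuing $\bigl(\phi(S_T-K)\bigr)^+$ under $\mathbbm{P}_{\text{\pounds}}$ with domestic discounting, inserting the payoff identity, and changing numeraire to express the $S_T(\cdots)^+$ piece as a foreign expectation under $\mathbbm{P}_{\text{\euro{}}}$ with foreign discounting yields $V_t(K,T,\phi) = S_t K\,\widehat V_t(1/K,T,-\phi)$, with the time-$t$ scalar $S_t$ coming from the GBP-to-EUR conversion and the scalar $K$ from the payoff identity. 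To recover the Black--Scholes--Merton form I would note that under $\mathbbm{P}_{\text{\euro{}}}$ the foreign forward $\widehat F$ is lognormal with the same volatility $\sigma$, so $\widehat V_t(1/K,T,-\phi)=\mathrm{BSM}(\widehat F_t;1/K,T,-\phi)$; as an independent check, substituting $\widehat F_t = F_t(T)^{-1}$ and strike $1/K$ gives $\widehat d_\pm = -d_\mp$, the sign flip $-\phi$ restores the arguments $\phi d_\mp$, and the discount factors reconcile through the forward--spot relation $S_t e^{-r_{\text{\euro{}}}(T-t)} = F_t(T)\,e^{-r_{\text{\pounds}}(T-t)}$, so collecting terms reproduces $\mathrm{BSM}(F_t;K,T,\phi)$ up to the scalars $S_t$ and $K$.

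The main obstacle is not analytic but one of bookkeeping: there is no hard estimate or limit to control, but one must be scrupulous about the currency in which each quantity is denominated, which discount rate and which forward enter each instance of $\mathrm{BSM}$, and precisely where the conversion scalars $S_t$ and $K$ appear. That accounting is the only place an error can enter, and it is exactly what pins down the relative normalization of the three expressions in the statement.
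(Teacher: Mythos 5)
Your proof is correct and takes essentially the same route as the paper's: the payoff inversion identity $(\phi(S_T-K))^+ = K S_T\,(\phi(\frac{1}{K}-\widehat S_T))^+$ combined with the change of measure $\mathrm{d}\mathbbm{P}_{\text{\euro{}}}/\mathrm{d}\mathbbm{P}_{\text{\pounds}}\big|_{\mathcal F_T} = F_T(T)/F_0(T)$, with the scalars $K$ and $S_t$ arising exactly as in the paper's argument. Your write-up is in fact somewhat more careful than the paper's -- you track the discount factors through the forward--spot relation, justify the validity of the measure change (true martingale via Novikov), and handle both signs $\phi=\pm 1$ uniformly, whereas the paper treats $\phi=1$ and dismisses $\phi=-1$ in one line -- but the underlying argument is the same.
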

\begin{proof}
The DOM investor values the call with payoff $(S_T - K)^+$ under $\mathbbm{P}_{\text{\pounds}}$ at GBP
  \[
  V_t(K, T, 1)
  = {\mathbbm{E}}^{\mathbbm{P}_{\text{\pounds}}}\{(S_T - K)^+ \,|\, \mathcal F_t\}
  \]
 and hence EUR
  \[
  V_t(K, T, 1) / S_t.
  \]
 \noindent Since
  \[
    (S_T - K)^+ = KS_T ({\mbox{$\frac{{1}}{{K}}$}} - {\mbox{$\frac{{1}}{{S_TK}}$}})^+
  \]
  together with the change of measure formula  for the martingale
  measures for every elementary event $A\in\Omega$: $\mathbbm{P}_{\text{\euro{}}} \{A\} = \mathbbm{E}^{\mathbbm{P}_{\text{\pounds{}}}} \{F_T [ A ]\}$
  and $S_T = F_T$, we have
  \[
  \mathbbm{E}^{\mathbbm{P}_{\text{\pounds{}}}}\{(S_T - K)^+ \,|\, \mathcal F_t\}
  = K {\mathbbm{E}}^{\mathbbm{P}_{\text{\euro{}}}}\{({\mbox{$\frac{{1}}{{K}}$}} - \widehat{S}_T)^+ \,|\, \mathcal F_t\}.
  \]
\noindent This implies that we can interpret the original call in EUR as a put in GBP with payoff
  $K({\mbox{$\frac{{1}}{{K}}$}} - \widehat{S}_T)^+$. The price of this GBP put in EUR is
  \[
  K \widehat{V}_t({\mbox{$\frac{{1}}{{K}}$}}, T, -1)
  = K {\mathbbm{E}}^{\mathbbm{P}_{\text{\euro{}}}}\{({\mbox{$\frac{{1}}{{K}}$}} - \widehat{S}_T)^+ \,|\, \mathcal F_t\}.
  \]
\noindent Therefore,
  \[
  V_t(K, T, 1) / S_t
  = 
  K \widehat{V}_t({\mbox{$\frac{{1}}{{K}}$}}, T, -1)
  \]
  and the claim for $\phi = 1$ follows. The claim for $\phi = -1$ follows from this by multplying both sides by $-1$.
 \end{proof} 
\begin{remark}
From the FOR investor's perspective $\{\widehat{F}_t(T), 0\leq t\leq T\}$ has a positive mass at zero under $\mathbbm{P}_{\text{\euro{}}}$ if and only if from the DOM investor's perspective $\{F_t(T), 0\leq t\leq T\}$ is a strict local martingale under $\mathbbm{P}_{\text{\pounds}}$ and vice versa. Consider  a contingent claim paying out one EUR at $T$ and set for simplicity $r_{\text{\pounds}}=r_{\text{\euro{}}}=0$. The FOR investor will price this at one EUR at any time $t$ independent of what happens to the GBP, whereas the DOM investor values the contract at $F_0(T)^{-1}\mathbbm{E}^{\mathbbm{P}_{\text{\pounds}}}\{F_t(T)|F_0(T)\}$ which equals one EUR as long as $\{F_t(T),t\geq 0\}$ is a true martingale under $\mathbbm{P}_{\text{\pounds}}$. However, if $\{F_t(T),t\geq 0\}$ is a strict local martingale under $\mathbbm{P}_{\text{\pounds}}$, then the value obtained by the DOM investor is strictly less than one EUR at any time before $T$ which comes from the fact that under $\mathbbm{P}_{\text{\euro{}}}$, the process $\{\widehat{F}_t(T),t\geq 0\}$ can hit zero in finite time. The economic interpretation of this mathematical property is that both investors fear the extreme event of a total devaluation of the GBP which has a non-zero probability from the FOR investor's perspective and manifests itself in the presence of a strict martingale rather than a true martingale from the DOM investor's perspective. Note moreover, that the DOM investor prices the contingent claim under the condition that the GBP has some strictly positive value. When pricing under the same positivity condition, the FOR investor obtains the same value strictly less than one EUR as the DOM investor.
\end{remark}
In analogy to the risk indicator defined in the work \cite{Piiroinen} let us
define the following quantity  
\begin{equation}\label{eqn:defect}
d_{\text{\pounds}}(T;\boldsymbol{\theta})=1-F_0(T)^{-1}\mathbbm{E}^{\mathbbm{P}_{\text{\pounds}}}\{F_t(T)|F_0(T)\}=\mathbbm{P}_{\text{\euro{}}}\{\widehat{F}_t(T)=0\}
\end{equation}
for given SABR parameters $\boldsymbol{\theta}=(\alpha,\nu,\rho)^T$ which quantifies the perceived tail risk assigned to the event of a massive devaluation of the GBP against the EUR. We call this the normalized GBP \emph{SABR martingale defect} for maturity $T$.
Moreover, we define what we call the GBP \emph{SABR martingale defect indicator} via the limit
\begin{equation}\label{eqn:indicator}
A_{\text{\pounds}}(\boldsymbol{\theta}):=\lim_{T\rightarrow\infty}d_{\text{\pounds}}(T;\boldsymbol{\theta})=1-\exp(-2\rho\alpha/\nu).
\end{equation}
We prefer the indicator (\ref{eqn:indicator}) over the normalized GBP SABR martingale defect (\ref{eqn:defect}) for maturity $T$ because it enables risk comparison between different maturities. We think that this is important because when approaching Brexit related events, the market participant's expectations and fears are reflected most prominently in those options expiring shortly after the particular event so that the relevant time to maturity decreases while approaching the event. Moreover, we justify our preference by the monotony properties of the martingale defect which can be seen from Figure \ref{fig:4} below. 
\begin{figure}[h!]
\centering
\begin{picture}(300,180)
\put(0,0){\includegraphics[width=11cm]{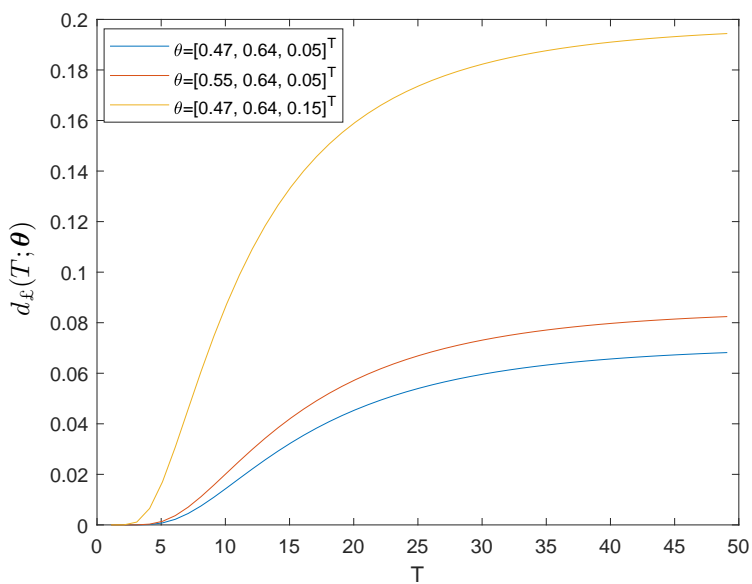}}
\put(15,105){\begin{rotate}{90}{\small$d_{\text{\pounds}}(T;\boldsymbol{\theta})$}\end{rotate}}
\end{picture}
\caption{Normalized GBP SABR martingale defect (\ref{eqn:defect}) for different parameter vectors $\boldsymbol{\theta}$ plotted against the maturity $T$ in years on the $x$-axis.}\label{fig:4}
\end{figure}
\pagebreak
\section{Statistical Method}
It has been shown in \cite{Jacquier} that in option markets where trades are
fully collateralized, the martingale defect can, in theory, be inferred
equivalently from both observed put and call implied volatility surfaces.
However, as the result is asymptotic in nature it is necessary to extrapolate
from the data observed in the market and, as in \cite{Piiroinen}, we employ the
lognormal SABR model for this task. As we have already pointed out, FX implied
volatility smiles are usually calibrated using merely $5$ broker quotes per
maturity time slice and the reliability of the $10$-delta quotes is somewhat
questionable. In order to account for this inherent calibration uncertainty, we
adopt a statistical perspective on the problem of calibrating the SABR model
(\ref{eqn:SABR1}), (\ref{eqn:SABR2}) in order to obtain the market implied SABR
martingale defect indicator (\ref{eqn:indicator}). To be precise, all
quantities are considered as random variables with certain prior distributions
that incorporate our prior knowledge about them. In this setting we can define
a statistical inverse problem whose solution is given by the posterior
distribution of the SABR parameters $\boldsymbol{\theta}= (\alpha,\nu,\rho)^T$
conditioned on the observed bid and ask market quotes which in turn yields the
posterior probability distribution of the quantity of interest, the martingale
defect indicator (\ref{eqn:indicator}), conditioned on the observed market
quotes.

\subsection{Statistical Inverse Problem}
Let $(\Omega',\mathcal{G},\mathbb{P})$ denote a probability space and let 
\begin{equation}
(\boldsymbol{\Theta},\boldsymbol{E}):\Omega'\rightarrow\mathbb{R}^{3+5},\quad \boldsymbol{Y}:\Omega'\rightarrow \mathbb{R}^5
\end{equation}
denote random vectors on this probability space. We use capital letters for random vectors and lower case letters for their realizations. The vector $(\boldsymbol{\Theta},\boldsymbol{E})$ represents the quantities that cannot be directly observed, i.e., the unknown SABR parameters $\boldsymbol{\theta}= (\alpha,\nu,\rho)^T$ and an unknown error vector $\boldsymbol{E}$ which accounts for discrepancies between SABR model and quotes observed in the market whereas $\boldsymbol{Y}$ represents the vector of mid implied volatilities. To be precise, for a fixed $T$ we define the random SABR-parameter-to-implied-volatility-map  
\begin{equation*}
(\boldsymbol{\Theta},\boldsymbol{E})\mapsto \mathcal{L}(\boldsymbol{\Theta},\boldsymbol{E})=f(\boldsymbol{\Theta})+\boldsymbol{E}=\left[\begin{array}{l}
\sigma^{\text{SABR}}(F_t;K_{10,-1},T,\boldsymbol{\Theta})\\
\sigma^{\text{SABR}}(F_t;K_{25,-1},T,\boldsymbol{\Theta})\\
\sigma^{\text{SABR}}(F_t;K_{\textrm{ATM}},T,\boldsymbol{\Theta})\\
\sigma^{\text{SABR}}(F_t;K_{25,1},T,\boldsymbol{\Theta})\\
\sigma^{\text{SABR}}(F_t;K_{10,1},T,\boldsymbol{\Theta})\end{array}\right]+\boldsymbol{E}=:\boldsymbol{Y},
\end{equation*}
where for a given realization $\boldsymbol{\theta}$, the model implied
volatility $\sigma^{\text{SABR}}(F_t;K,T,\boldsymbol{\theta})$ is computed via
the second order asymptotic formula obtained by Paulot, cf. \cite{Paulot}:
\begin{equation*}
\sigma_0(F_t;K,T,\boldsymbol{\theta})\left(1+\frac{\sigma_1(F_t;K,T,\boldsymbol{\theta})}{\sigma_0(F_t;K,T,\boldsymbol{\theta})}T+\frac{\sigma_2(F_t;K,T,\boldsymbol{\theta})}{\sigma_0(F_t;K,T,\boldsymbol{\theta})}T^2+o(T^2)\right)
\end{equation*}
with $\sigma_0(F_t;K,T,\boldsymbol{\theta}),$
$\sigma_1(F_t;K,T,\boldsymbol{\theta})$ and
$\sigma_2(F_t;K,T,\boldsymbol{\theta})$ defined as in \cite{Paulot}.
This formula can be evaluated very efficiently which is crucial with regard to the computational cost of our statistical sampling method. At the same time the formula can be shown to be highly accurate in the strike and maturity regimens which we are interested in here. The probability distribution of the random vector $\boldsymbol{Y}$ conditioned on the vectors $\boldsymbol{\theta}$ and $\boldsymbol{e}$ is given by
\begin{equation*}
 \pi(\boldsymbol{y}|\boldsymbol{\theta},\boldsymbol{e})=\delta(\boldsymbol{y}-\mathcal{L}(\boldsymbol{\theta},\boldsymbol{e})),
\end{equation*}
where $\delta$ denotes Dirac's delta in $\mathbb{R}^k$. Let $\pi_{\text{pr}}$ denote the prior probability density of $(\boldsymbol{\Theta},\boldsymbol{E})$, then we may write the joint probability density of $(\boldsymbol{\Theta},\boldsymbol{E})$ and $\boldsymbol{Y}$ as
\begin{equation}\label{eqn:den}
\pi(\boldsymbol{\theta},\boldsymbol{e},\boldsymbol{y})=\pi(\boldsymbol{y}|\boldsymbol{\theta},\boldsymbol{e})\pi_{\text{pr}}(\boldsymbol{\theta},\boldsymbol{e})=\delta(\boldsymbol{y}-\mathcal{L}(\boldsymbol{\theta},\boldsymbol{e}))\pi_{\text{pr}}(\boldsymbol{\theta},\boldsymbol{e}). 
\end{equation}
For simplicity, we assume here that $\boldsymbol{\Theta}$ and $\boldsymbol{E}$ are independent random variables. Then we obtain from (\ref{eqn:den}) by integration
\begin{equation}
\pi(\boldsymbol{\theta},\boldsymbol{y})=\pi_{\text{pr}}(\boldsymbol{\theta})\pi_{\text{noise}}(\boldsymbol{y}-\mathcal{L}(\boldsymbol{\theta})) 
\end{equation}
so that we can formulate the following statistical inverse calibration problem: 

\vspace{.25cm}
\noindent\emph{Compute the posterior distribution of $\boldsymbol{\Theta}$ conditioned on the observed market implied volatility quotes $\boldsymbol{y}$ which is given by Bayes' formula
\begin{equation}\label{eqn:Bayes}
\pi(\boldsymbol{\theta}|\boldsymbol{y})=\frac{\pi(\boldsymbol{\theta},\boldsymbol{y})}{\int_{\mathbb{R}^3} \pi(\boldsymbol{\theta},\boldsymbol{y})\mathrm{d} \boldsymbol{\theta}}.
\end{equation}
\noindent Given the (\ref{eqn:Bayes}), compute the posterior density for our quantity of interest, the martingale defect indicator 
\begin{equation}\label{eqn:stat_defect}
\pi(A_{\text{\pounds}}(\boldsymbol{\theta})|\boldsymbol{y}).
\end{equation}}

\subsection{Sampling the Posterior Density}
The unnormalized posterior density reads
\begin{equation*}
\pi(\boldsymbol \theta \vert \boldsymbol y) \propto \pi(\boldsymbol\theta) \pi(\boldsymbol y \vert \boldsymbol\theta),
\end{equation*}
where  $\pi(\boldsymbol \theta)$ and $\pi(\boldsymbol y \vert \boldsymbol\theta)$ are the prior and likelihood probability density, respectively.
We factorize the prior as 
\begin{equation*}
\pi(\boldsymbol\theta)  = \pi(\alpha)\pi(\nu)\pi(\rho) =  [\alpha\in \mathbb{R}][\nu \geq 0] [ \vert \rho\vert \leq 1 ],
\end{equation*}
where we have a flat prior for $\alpha$, flat prior in $\mathbb{R}^+$ for $\nu$, and a uniform prior for $\rho\in [-1,1]$ and for notational convenience, we have used the Iverson bracket $[\cdot ]$ as an indicator function:
\begin{equation*}
[B] := \begin{cases} 1 , & \textrm{if $B$ is true,} \\ 0, & \textrm{otherwise.} \end{cases}
\end{equation*}
Our prior construction is an improper prior, however, in practical numerical
computations in connection with the likelihood density, this posterior density
becomes a proper probability density. This means that in contrast to sampling
from an improper prior density, sampling the corresponding posterior is indeed
feasible. For a discussion on using improper priors in MCMC sampling schemes,
we refer to Hobert and Casella \cite{Hobert1996}.

We assume that the observation error is Gaussian such that the likelihood function is given by
\begin{equation*}
\pi(\boldsymbol y \vert \boldsymbol\theta) \propto \exp\left(-\frac{1}{2}  ( \boldsymbol{y}-f(\boldsymbol\theta))^T \Sigma^{-1}( \boldsymbol{y}-f(\boldsymbol\theta))\right)\prod_{i=1}^k \left[ \vert  \boldsymbol{y}_i-f_i(\boldsymbol\theta) \vert \leq \frac{1}{2}\text{BA}_i\right],
\end{equation*}
where $\Sigma$ is  covariance matrix of the observation error $\boldsymbol E$,
and $\text{BA}_i$ is the bid-ask spread at the $i$-th strike in terms of the
implied volatilities. The corresponding posterior proves to be difficult to
study analytically, as we have a non-linear parameter estimation problem with
somewhat complex priors and constraints. To cope with this complexity we use an
adaptive combined optimization and MCMC sampling algorithm. With Nelder-Mead
optimization, we compute the maximum a posteriori (map) estimate. Given this
map estimate as a start value, we use adaptive MCMC in the sense of Haario,
Saksman and Tamminen \cite{Haario} for both, obtaining the conditional mean
estimator which approximates the conditional expectation
\begin{equation*}
\mathbbm{E}^{\mathbbm{P}}\{A_{\text{\pounds}}(\boldsymbol{\Theta})|\boldsymbol{y}\}
\end{equation*}
and for providing uncertainty quantification for all the unknown parameters and the GBP martingale defect indicator by estimation of the corresponding marginal distributions.
For more details on the algorithm, we refer to the work \cite{Piiroinen}.

\section{Results}
For the observation dates listed in Appendix A, we have computed the conditional mean of the GBP SABR martingale defect indicator for the solution ({\ref{eqn:Bayes}) of the statistical inverse calibration problem conditioned on the observed market implied volatility quotes for the currency pair EURGBP. The data was obtained from Refinitiv at the New York cut at 10 a.m. ET. Brexit was originally meant to happen on March 29 2019, but on March 22 the EU-27 approved a Brexit extension until April 12. On April 10 2019, a further half-year extension was agreed between the UK and the EU-27 at the EU summit, until October 31 2019. Taking this into account, we have chosen the relevant expiry dates according to Table \ref{tab:1}. As is illustrated by Figure \ref{fig:surf}, the perceived event risk manifests itself in both magnitude and skewness of the quoted nearby volatility time slices. 
\begin{minipage}{0.94\textwidth}  \centering
\captionof{table}{Expiries used for the computation of the martingale defect indicator.}\label{tab:1}
\begin{tabularx}{.925\textwidth} {  
  | >{\centering\arraybackslash}X 
	| >{\centering\arraybackslash}X 
	| >{\centering\arraybackslash}X 
	| >{\centering\arraybackslash}X 
	| >{\centering\arraybackslash}X 
	| >{\centering\arraybackslash}X 
	| >{\centering\arraybackslash}X 
	| >{\centering\arraybackslash}X 
	| >{\centering\arraybackslash}X 
	| >{\centering\arraybackslash}X 
	| >{\centering\arraybackslash}X 
  | >{\centering\arraybackslash}X  
	| >{\centering\arraybackslash}X | }
 \hline
 15 Jan & 12 Mar & 14 Mar & 21 Mar & 29 Mar & 10 Apr & 24 May & 24 Jul & 28 Aug & 03 Sep & 09 Sep & 17 Oct & 19 Oct \\
 \hline
 2M  & 2W  & 2W & 1W & 2W & 1W & 5M & 3M & 2M & 2M & 2M & 2W & 2W \\
\hline
\end{tabularx}
\end{minipage}
\begin{figure}[t!]
\centering
\begin{picture}(320,200)
\put(0,0){\includegraphics[width=11cm]{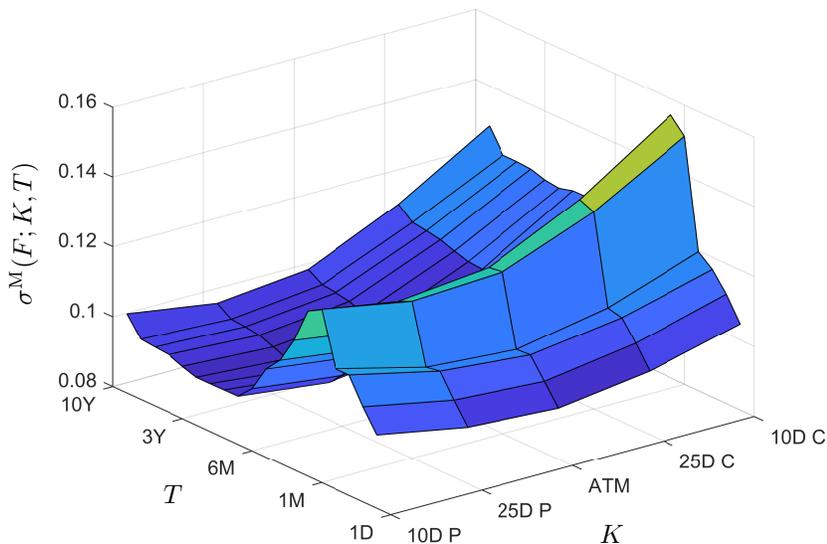}}
\put(10,105){\begin{rotate}{90}{\small$\sigma^{\text{M}}(F;K,T)$}\end{rotate}}
\put(60,30){\small$T$}
\put(225,15){\small$K$}
\end{picture}
\caption{Raw input implied volatility surface obtained from linear interpolation of the mid market quotes for August 28. The anticipated event risk with regard to October 31 is clearly visible for the expiry dates 1M and 2M.}\label{fig:surf}
\end{figure}

Our main result is the Brexit \lq\lq fever curve\rq\rq\ depicted in Figure \ref{fig:ind} which shows the conditional mean estimate obtained from sampling the posterior density (\ref{eqn:stat_defect}) of the GBP SABR martingale defect indicator (\ref{eqn:defect}) for the observation dates and expiry dates in Table \ref{tab:1}. It should be mentioned that after October 19 and until December 11 (the date of publication of the preprint of this work), the perceived tail risk measured via the martingale defect indicator stayed below $2\%$ for all expiries longer than two weeks. In particular the perceived tail risk around the general election on December 12 was mainly related to short-term moves. 
The computation is based on 100000 MCMC samples per observation day and for the observation error we have assumed zero-mean white noise with unit variance. For illustration purposes the implied volatility smile obtained from the conditional mean estimate of the random SABR parameter $\boldsymbol{\Theta}$ conditioned on the 6M observed market data $\boldsymbol{y}$ for the observation day April 10 is depicted in Figure \ref{fig:3}. Note that this expiry is different from the one we have used for computing the respective point in the \lq\lq fever curve\rq\rq\ which is 1W. The background is that on April 10 a half-year extension was agreed between the UK and the EU-27, until October 31 2019. As a result, during that trading day the pronounced skewness of the implied volatility smile shifted from the short term expiry dates to those expiry dates around October 31, i.e., 5M, 6M and 9M. This is visible in the corresponding GBP SABR martingale defect indicators. While the perceived 1W short term risk measured by our MCMC approximation of $\mathbbm{E}^{\mathbbm{P}}\{A_{\text{\pounds}}(\boldsymbol{\Theta})|\boldsymbol{y}\}$ is merely $1.40\%$, the perceived 6M risk corresponding to the smile shown in Figure \ref{fig:3} is comparably high, namely $8.75\%$. That is, within a couple of days, the time horizon of the perceived risk has shifted as a result of the political events that finally let to the extension. 

Moreover, it can be see from the plot that the SABR model fits the market smile well inside the bid-ask spread. The fact that these bid-ask spreads are rather wide for the out-of-the-money-options is linked to the issue of non-uniqueness for the solution of the classical deterministic calibration problem and underlines the usefulness of our statistical approach enabling parameter uncertainty quantification. This uncertainty quantification is illustrated in Figure \ref{fig:chain}, where we show traceplots and plots of cumulative averages for the estimates of $\alpha$, $\rho$, $\nu$ and the martingale defect indicator $A_{\text{\pounds}}(\boldsymbol{\theta})$. We also plot the marginal densities obtained by removing the burn-in period (25\% of the whole chain length), and by using a kernel density estimator with Epanechnikov kernels. By visually assessing, we note that we have good mixing of the chains, however, without the optimization step for choosing the start value, the chains would not converge.

\begin{figure}[t!]
\centering
\begin{picture}(320,200)
\put(0,0){\includegraphics[width=11cm]{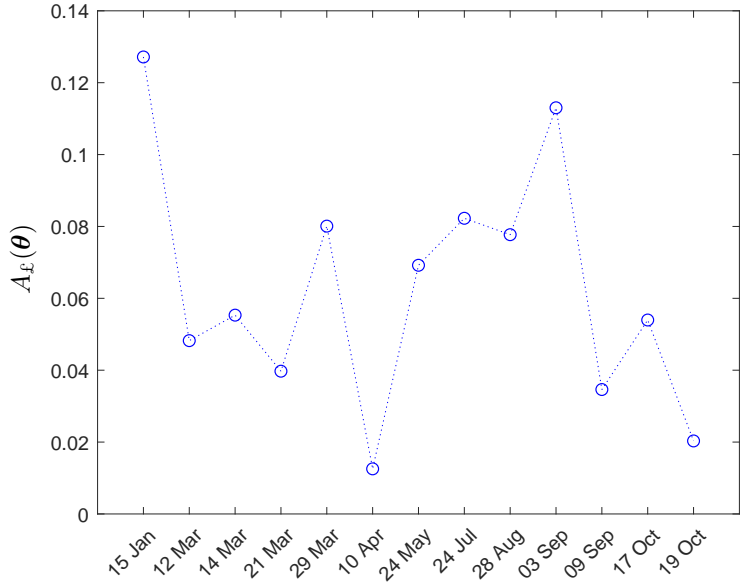}}
\put(15,110){\begin{rotate}{90}{\small$A_{\text{\pounds}}(\boldsymbol{\theta})$}\end{rotate}}
\end{picture}
\caption{The Brexit \lq\lq fever curve\rq\rq\ given by the conditional mean estimates obtained from sampling the posterior density (\ref{eqn:stat_defect}).}\label{fig:ind}
\end{figure}

\begin{figure}[t!]
\centering
\begin{picture}(320,200)
\put(0,0){\includegraphics[width=11cm]{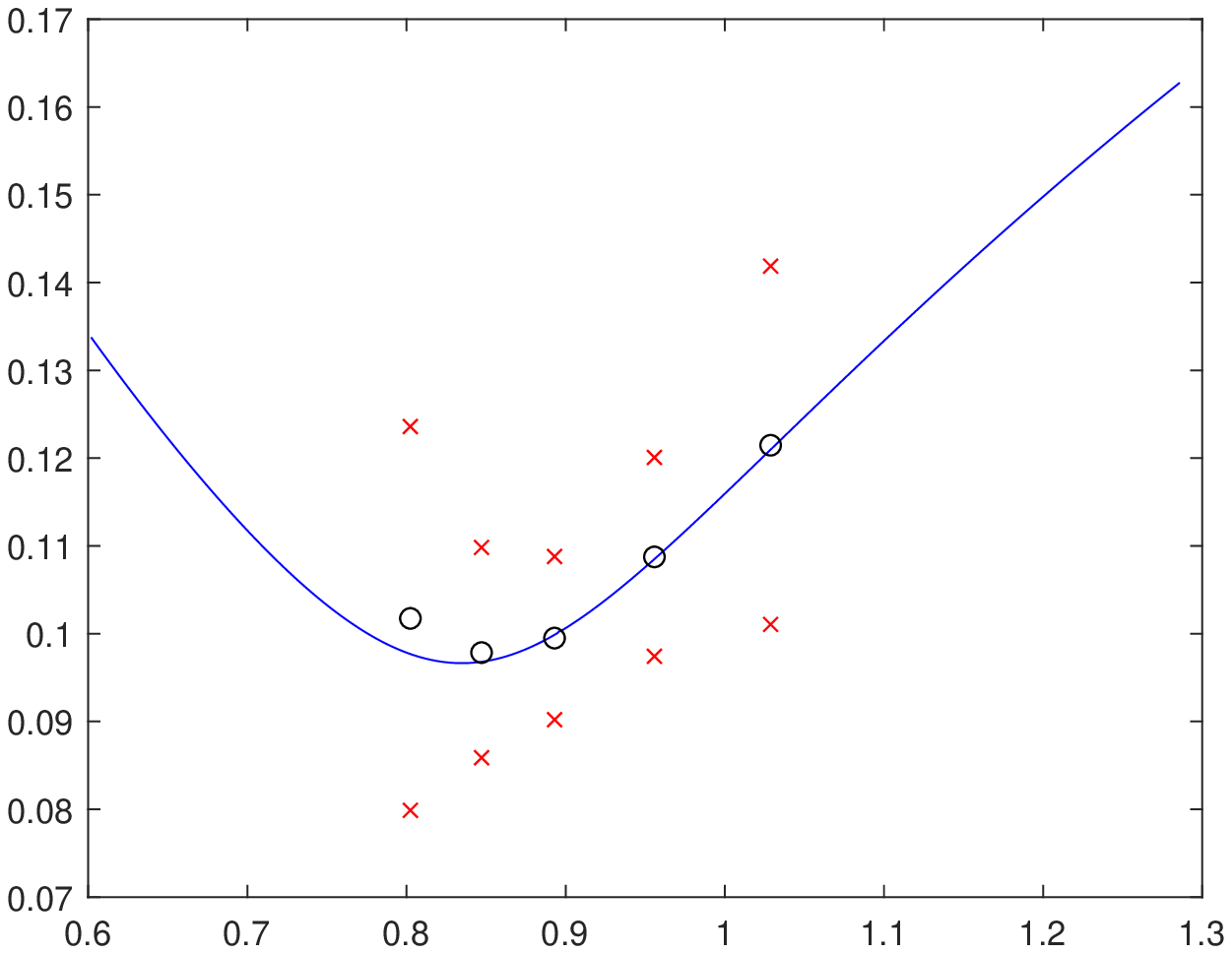}}
\put(15,84){\begin{rotate}{90}{\small$\sigma^{\text{SABR}}(F_t;K,T,\boldsymbol{\theta})$}\end{rotate}}
\put(160,3){{\small$K$}}
\end{picture}
\caption{April 10 implied volatility smile obtained from the conditional mean of the random SABR parameter vector $\boldsymbol{\Theta}$ conditioned on the 6M observed bid and ask volatility quotes given by the red x markers (o markers are the corresponding mids). The corresponding GBP SABR martingale defect indicator is $8.75\%$.}\label{fig:3}
\end{figure}

\begin{figure}[b!]
\centering
\begin{picture}(320,260)
\put(0,0){\includegraphics[width=11cm]{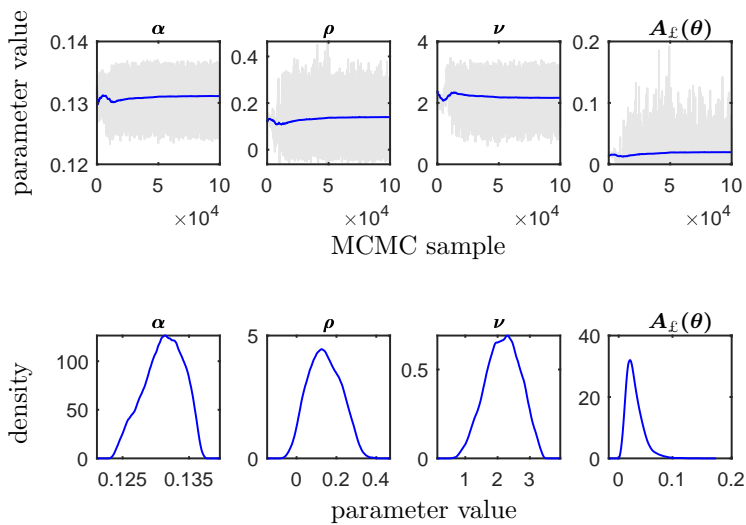}}
\put(250,203){{\footnotesize$\boldsymbol{A_{\text{\pounds}}(\boldsymbol{\theta})}$}}
\put(250,92){{\footnotesize$\boldsymbol{A_{\text{\pounds}}(\boldsymbol{\theta})}$}}
\put(130,120){{\small MCMC sample}}
\put(15,142){\begin{rotate}{90}{\small parameter value}\end{rotate}}
\put(15,48){\begin{rotate}{90}{\small density}\end{rotate}}

\put(130,20){{\small parameter value}}
\end{picture}
\caption{October 19 sampled MCMC chains (upper row) and corresponding estimated densities (lower row).}\label{fig:chain}
\end{figure}

\clearpage
\newpage
\mbox{~}

\section{Conclusion}
We found that the martingale defect which occurs in the lognormal SABR model in the presence of extremely skewed implied volatility smiles may be used to asses the market expectations of extreme events such as a no-deal Brexit. For this purpose we introduced a statistical SABR martingale defect indicator which quantifies the market expectation for the GBP to progressively depreciate against the EUR based on observed EURGBP option prices. This forward-looking measure of market expectations accounts for the inherent uncertainty due to the small number of reliable volatility quotes observable in the market and it should be of great use for risk management purposes such as data-driven risk scenario generation or stress testing. Finally, we would like to point out that the \lq\lq fever curve\rq\rq\ we have computed for a timeline of Brexit related events in 2019 quantifies remarkably well the public perception of economic risk related to a no-deal Brexit scenario.

\section{Afterword}
This paper is the final version of a preprint dated December 11 2019 that originally appeared on arXiv (https://arxiv.org/abs/1912.05773) on December 12 2019, in advance of the general election on the same day.


\section*{Funding}
This work has been funded by Academy of Finland (decision numbers 326240 and 326341, and Finnish Centre of
Excellence in Inverse Modelling and Imaging, decision number 312119).

\bibliographystyle{rQUF}

\pagebreak

\appendix
\newpage
\section{Timeline of events for the tested dates}
\nopagebreak
\begin{table}[h]
\caption{2019 Brexit Timeline part 1/2. Source: Wikipedia}
\centering
\begin{minipage}[t]{1.\linewidth}
\color{gray}
\rule{\linewidth}{1pt}
\ytl{15 Jan}{The First meaningful vote is held on the Withdrawal Agreement in the UK House of Commons. The UK Government is defeated by 432 votes to 202}
\ytl{12 Mar}{The Second meaningful vote on the Withdrawal Agreement with the UK Government is defeated again by 391 votes to 242}
\ytl{14 Mar}{The UK Government motion passes 412 to 202 to extend the Article 50 period}
\ytl{21 Mar}{The European Council offers to extend the Article 50 period until 22 May 2019 if the Withdrawal Agreement is passed by 29 March 2019 but, if it does not, then the UK has until 12 April 2019 to indicate a way forward. The extension is formally agreed the following day}
\ytl{29 Mar}{The original end of the Article 50 period and the original planned date for Brexit. Third vote on the Withdrawal Agreement after being separated from the Political Declaration. UK Government defeated again by 344 votes to 286}
\ytl{10 Apr}{The European Council grants another extension to the Article 50 period to 31 October 2019, or the first day of the month after that in which the Withdrawal Agreement is passed, whichever comes first}
\ytl{24 May}{Theresa May announces that she will resign as Conservative Party leader, effective 7 June, due to being unable to get her Brexit plans through parliament and several votes of no-confidence, continuing as prime minister while a Conservative leadership contest takes place}
\ytl{24 Jul}{Boris Johnson accepts the Queen's invitation to form a government and becomes Prime Minister of the United Kingdom, the third since the referendum}

\bigskip
\rule{\linewidth}{1pt}%
\end{minipage}%
\end{table}

\begin{table}
\caption{2019 Brexit Timeline part 2/2. Source: Wikipedia}
\centering
\begin{minipage}[t]{1.\linewidth}
\color{gray}
\rule{\linewidth}{1pt}
\ytl{28 Aug}{Boris Johnson announces his intention to prorogue Parliament in September}
\ytl{3 Sep}{A motion for an emergency debate to pass a bill that would rule out a unilateral no-deal Brexit by forcing the Government to get parliamentary approval for either a withdrawal agreement or a no-deal Brexit. This motion, to allow the debate for the following day, passed by 328 to 301. 21 Conservative MPs voted for the motion}
\ytl{9 Sep}{The Government again loses an attempt to call an election under the Fixed-term Parliaments Act. Dominic Grieve's humble address, requiring key Cabinet Office figures to publicise private messages about the prorogation of parliament, is passed by the House of Commons. Speaker John Bercow announces his intention to resign as Speaker of the House of Commons on or before 31 October. The Benn Bill receives Royal Assent and becomes the European Union (Withdrawal) (No. 2) Act 2019. Parliament is prorogued until 14 October 2019. Party conference season begins, with anticipation building around a general election}
\ytl{17 Oct}{The UK and European Commission agree on a revised withdrawal agreement containing a new protocol on Northern Ireland. The European Council endorses the deal}
\ytl{19 Oct}{A special Saturday sitting of Parliament is held to debate the revised withdrawal agreement. The prime minister moves approval of that agreement. MPs first pass, by 322 to 306, Sir Oliver Letwin's amendment to the motion, delaying consideration of the agreement until the legislation to implement it has been passed; the motion is then carried as amended, implementing Letwin's delay. This delay activates the Benn Act, requiring the prime minister immediately to write to the European Council with a request for an extension of withdrawal until 31 January 2020}
\bigskip
\rule{\linewidth}{1pt}%
\end{minipage}%
\end{table}

\end{document}